\documentclass[letterpaper, 10pt, conference]{ieeeconf}
\IEEEoverridecommandlockouts
\overrideIEEEmargins

\usepackage{graphicx, enumerate}
\usepackage{cite,array, bm, epstopdf}
\usepackage{dsfont}
\usepackage{mathrsfs}
\usepackage{amssymb, amsmath}
\interdisplaylinepenalty=2500
\usepackage{subeqnarray}
\usepackage{cases}
\addtocounter{MaxMatrixCols}{20}
\usepackage{amssymb, algorithm, algorithmic}
\usepackage{amsmath}
\usepackage{stfloats}

\newtheorem{assumption}{Assumption}
\newtheorem{theorem}{Theorem}

\newtheorem{lemma}{Lemma}

\newtheorem{definition}{Definition}
\newtheorem{remark}{Remark}

\DeclareMathOperator{\p}{p}
\DeclareMathOperator{\ct}{c}
\DeclareMathOperator{\mati}{mati}
\DeclareMathOperator{\mad}{mad}

\DeclareMathOperator{\diag}{diag}

\DeclareMathOperator{\prb}{\mathds{P}}
\DeclareMathOperator{\E}{\mathds{E}}

\begin{document}

\title{\LARGE \bf Stability and $\mathcal{H}_{\infty}$ Performance Analysis of Stochastic Linear Networked and Quantized Control Systems
\thanks{This work was supported by National Natural Science Foundation of China, under Grant 61773357.}
}

\author{Wei~Ren and Junlin~Xiong, \IEEEmembership{Member, IEEE}
\thanks{W. Ren is with Division of Decision and Control Systems, EECS, KTH Royal Institute of Technology, SE-10044, Stockholm, Sweden. J. Xiong is with Department of Automation, University of Science and Technology of China, Hefei, 230026, Anhui, China.
Email: \texttt{\small gtppwe@gmail.com}, \texttt{\small junlin.xiong@gmail.com}.}
}

\maketitle

\begin{abstract}
This paper studies the stability and $\mathcal{H}_{\infty}$ performance analysis problem for linear networked and quantized control systems with both communication delays random packet losses. To deal with the network-induced uncertainties and random packet dropouts, a novel discrete-time stochastic system model is developed for continuous-time networked control systems, and further overapproximated via a polytopic system with norm-bounded uncertainty. Based on the overapproximated system model, sufficient conditions are established for linear networked and quantized control systems in different cases to guarantee input-to-state stability and $\mathcal{H}_{\infty}$ performance with respect to the network-induced errors. Finally, a numerical example is presented to illustrated the developed results.
\end{abstract}

\section{Introduction}
\label{sec-intro}

Feedback control systems with (part of) control loop closed over a shared wired/wireless communication network are called networked control systems (NCSs) \cite{Lunze2014control}. The presence of the communication network in control systems offers considerable benefits, such as increased flexibility and low installation and maintenance cost. Meanwhile, the capability-limited network also induces many issues, like quantization errors, time-varying transmission intervals, time-varying transmission delays, packet dropouts and communication constraints, which have great effects on different performances of NCSs. As a result, system modelling, stability analysis and controller design of NCSs have attracted numerous attention in the past few decades, and individual or joint issues have been considered in previous works; see \cite{Donkers2011stability, Heemels2010networked, Nesic2009unified, Xia2011analysis, Zhang2013network} and references therein.

Due to the aforementioned issues, there are two main types of system models for NCSs in the literature: the deterministic models \cite{Nesic2004input, Cloosterman2010controller, Donkers2011stability, Walsh2002stability} and the stochastic models \cite{Donkers2012stability, Tabbara2008input, Tsumura2009tradeoffs, Antunes2013stability}. In the deterministic models, the effects of different issues are limited with some hard bounds. For instance, transmission intervals and transmission delays are bounded in some given intervals, the dropout packets are transformed into the constraints on the maximally allowable transmission interval \cite{Heemels2010networked, Van2012discrete}. However, since these hard bounds result in the conservatism of the obtained conditions and some network-induced issues have certain stochastic natures, the stochastic models are proposed to characterize these stochastic natures more accurately and to derive less conservative results. For instance, the packet dropouts are stochastic due to the slotted carrier sense multiple access with collision avoidance (CSMA/CA) mechanism in \cite{Park2009generalized}, and have been modelled as a Bernoulli distributed sequence in \cite{Tsumura2009tradeoffs, Quevedo2011packetized, Yang2006control, Wang2007robust}. Furthermore, both transmission intervals and delays were assumed to be characterized by some probability functions in \cite{Antunes2013stability, Donkers2012stability, Xu2012stochastic}. Even though different stochastic models have been constructed in the literature, only parts of the aforementioned network-induced issues are considered, which motives us to focus on this topic further.

In this paper, we study the stability and $\mathcal{H}$ performance analysis problem of linear networked control systems with all aforementioned network-induced issues, which are called networked and quantized control systems (NQCSs). In particular, transmission intervals, communication delays and packet losses are allowed to be random. To this end, our first contribution is to develop a novel system model for the linear NQCS. Using the discrete-time modelling approach, the considered continuous-time system is first discretised into the discrete-time system with random parameters. Furthermore, following the approximation technique applied in \cite{Donkers2011stability, Donkers2012stability, Van2013stability}, the set of admissible transmission intervals and delays is approximated, and hence random parameters are isolated. Therefore, the discrete-time system is overapproximated by a polytopic system with norm-bounded uncertainties. The developed system is general due to the following reasons: (1) all the network-induced issues are studied in this paper, whereas the quantization has not been investigated in \cite{Donkers2011stability, Donkers2012stability, Cloosterman2010controller}; (2) transmission intervals, communication delays and the packet dropouts are allowed to be random, while the effects of all the network-induced issues were constrained with hard bounds in \cite{Donkers2011stability, Van2013stability, Cloosterman2010controller}.

According to the developed polytopic system, both the exponential mean-square input-to-state stability (EMSISS) and the $\mathcal{H}_{\infty}$ performance are studied, which is the second contribution of this paper. Since the polytopic system is the overapproximation of the original system, the stability and performance analysis is on the polytopic system. For two different time-scheduling protocols, sufficient conditions are derived in terms of linear matrix inequalities (LMIs). Due to the randomness of transmission intervals, communication delays and the packet dropouts, the considered stability property and $\mathcal{H}_{\infty}$ performance are the stochastic version, and different from those in \cite{Donkers2011stability, Donkers2012stability, Van2013stability}. In particular, different from the $\mathcal{H}_{\infty}$ performance with respect to the external disturbance in \cite{Yang2006control, Wang2007robust}, the disturbance here comes from the network-induced issues, and can be treated as the interior disturbance. As a result, the obtained results provide alternative conditions for system stability and performances.

Th reminder of this paper is organized as follows. The discrete-time system model with both stochastic parameters and uncertainties is developed for linear NQCSs in Section \ref{sec-problemformation}. We approximate the discrete stochastic system model as a polytopic system with the norm-bounded uncertainty in Section \ref{sec-overapproximation}. Using the developed approximated system, sufficient conditions are established to guarantee system stability and the $\mathcal{H}_{\infty}$ performance in Section \ref{sec-stabilityanalysis}. A numerical example is presented in Section \ref{sec-example}. Conclusion and the future work are stated in Section \ref{sec-conclusion}.

\emph{Notation:} $\mathbb{R}:=(-\infty, +\infty)$; $\mathbb{R}^{+}:=[0, +\infty)$; $\mathbb{N}:=\{0, 1, \ldots\}$; $\mathbb{N}^{+}:=\{1, 2, \ldots\}$. $\|\cdot\|$ stands for Euclidean norm. $\mathbb{B}$ stands for the unit ball and $\mathbb{B}^{\circ}$ denotes the interior of the unit ball; $\mathds{B}(a, b)$ denotes the hypertopic box centered at $a\in\mathbb{R}^{n}$ with the edges $2b$. Given a square matrix $A$, the superscript `$\top$' stands for matrix transposition; the symbol `$\star$' denotes the symmetric part for a symmetric matrix; $\lambda_{\min}(A)$ and $\lambda_{\max}(A)$ are the minimal and maximal eigenvalues of $A$, respectively. $\mathds{P}\{\cdot\}$ denotes the probability and $\mathds{E}\{\cdot\}$ denotes the expectation. To shorten notation, $(x^{\top}, y^{\top})^{\top}$ is denoted by $(x, y)$.

\section{System Model and Problem Formulation}
\label{sec-problemformation}

In this section, a discrete-time stochastic system model is developed for linear NQCSs, the configuration of which is presented in Fig. \ref{fig-1}. Consider the following continuous-time linear system:
\begin{equation}
\label{eqn-1}
\dot{x}_{\p}(t)=A_{\p}x_{\p}(t)+B_{\p}\hat{u}(t), \quad y(t)=C_{\p}x_{\p}(t),
\end{equation}
where $x_{\p}\in\mathbb{R}^{n_{\p}}$ is the system state, $\hat{u}\in\mathbb{R}^{n_{u}}$ is the most recent control input available at the actuator, and $y\in\mathbb{R}^{n_{y}}$ is the system output. The continuous-time linear controller designed for the system \eqref{eqn-1} is of the form:
\begin{equation}
\label{eqn-2}
\dot{x}_{\ct}(t)=A_{\ct}x_{\ct}(t)+B_{\ct}\hat{y}(t), \quad u(t)=C_{\ct}x_{\ct}(t)+D_{\ct}\hat{y}(t),
\end{equation}
where $x_{\ct}\in\mathbb{R}^{n_{\ct}}$ is the controller state, $\hat{y}\in\mathbb{R}^{n_{y}}$ is the most recent plant output available at the controller and $u\in\mathbb{R}^{n_{u}}$ is the controller output.

\subsection{Information Transmission}

\begin{figure}[!t]
\begin{center}
\begin{picture}(65, 95)
\put(-65, -12){\resizebox{65mm}{35mm}{\includegraphics[width=2.5in]{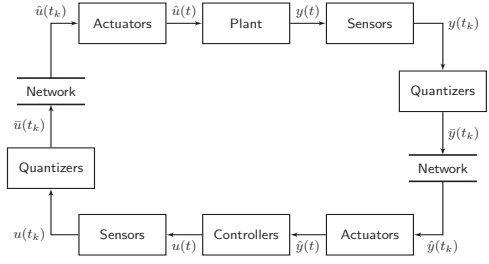}}}
\end{picture}
\end{center}
\caption{General framework of networked and quantized control systems.}
\label{fig-1}
\end{figure}

The outputs of the plant \eqref{eqn-1} and the controller \eqref{eqn-2} are sampled, quantized and then transmitted via the network.
According to the number of sensors and actuators, the network is assumed to have $L\in\mathbb{N}^{+}$ nodes. Correspondingly, the outputs of the plant and the controller are partitioned into $L$ parts. At each transmission time, one and only one node is allowed to transmit the information via the network. Which node is granted to access to the network depends on the time-scheduling protocol, which will be given in Subsection \ref{sec-protocols}. The sampling time sequence is denoted by $\{t_{k}\in\mathbb{R}^{+}: k\in\mathbb{N}\}$, which is aperiodic and strictly increasing, and thus the transmission intervals are defined as $h_{k}:=t_{k+1}-t_{k}$, $k\in\mathbb{N}$. Because of the measurement computation and transmission among digital equipments (e.g., the network and quantizer), there is a sequence of the transmission delays $\tau_{k}\geq0$, $k\in\mathbb{N}^{+}$. Assume that the transmission intervals and the transmission delays are random variables and described by an independent and identically distributed (i.i.d.) stochastic process. To be specific, the following assumption is made for the transmission intervals and the transmission delays.

\begin{assumption}
\label{asn-1}
For each $k\in\mathbb{N}^{+}$, the transmission interval $h_{k}$ and the transmission delay $\tau_{k}$ are i.i.d. random variables, which are characterized by a probability distribution with $\prb\{(h, \tau)\in(\Theta+\delta\mathbb{B}^{\circ})\}=1$, where $\Theta=\{(h, \tau)\in\mathbb{R}^{2}: h\in[\varepsilon, h_{\mati}], \tau\in[0, \min\{h, \tau_{\mad}\})\}$, $\varepsilon\in(0, h_{\mati})$, $\delta>0$ and $h_{\mati}\geq\tau_{\mad}\geq0$.
\end{assumption}

In Assumption \ref{asn-1}, $\varepsilon>0$ is used to avoid Zeno solutions and comes from hardware constraints of the network and quantizer. The constants $h_{\mati}$ and $h_{\mad}$ are respectively called the maximally allowable transfer interval (MATI) and the maximally allowable delay (MAD), both of which are the parameter to be designed.

\begin{remark}
\label{rmk-1}
Assumption \ref{asn-1} recovers many scenarios in the literature \cite{Donkers2012stability, Heemels2009networked, Van2013stability} as the special cases. For instance, the set $\Theta$ is reduced to be $\{(h, \tau)\in\mathbb{R}^{2}: h>0, \tau\in[0, h)\}$ for the stochastic networked control systems \cite{Donkers2012stability}. For the deterministic networked and quantized control systems \cite{Heemels2009networked, Van2013stability}, Assumption \ref{asn-1} is reduced to be deterministic.
\hfill $\square$
\end{remark}

After sampling, the sampled information is quantized to match the limited capacity of the network. Each node has a quantizer, which is a piecewise continuous function $\mathbf{q}_{j}: \mathbb{R}^{n_{j}}\rightarrow\mathcal{Q}_{j}\subset\mathbb{R}^{n_{j}}$, where $\mathcal{Q}_{j}$ is a finite or countable set and $j\in\mathfrak{L}$. The quantizer is assumed to  satisfy the following assumption.

\begin{assumption}[\cite{Ren2018quantized, Liberzon2003hybrid, Liberzon2007input}]
\label{asn-2}
For each $j\in\mathfrak{L}$, there exist $M_{j}>\Lambda_{j}>0$ and $\Lambda_{0j}>0$ such that: (i) $\|z_{j}\|\leq M_{j}\Rightarrow\|\mathbf{q}(z_{j})-z_{j}\|\leq\Lambda_{j}$; (ii) $\|z_{j}\|>M_{j}\Rightarrow\|\mathbf{q}_{j}(z_{j})\|>M_{j}-\Lambda_{j}$; (iii) $\|z_{j}\|\leq\Lambda_{0j}\Rightarrow\mathbf{q}_{j}(z_{j})\equiv0$.
\end{assumption}

Based on Assumption \ref{asn-2}, the applied quantizer is given by
\begin{equation}
\label{eqn-3}
q_{j}(\mu_{j}, z_{j}):=\mu_{j}\mathbf{q}(z_{j}/\mu_{j}),
\end{equation}
where $\mu_{j}\in\mathbb{R}_{+}$ is the time-varying quantization parameter. For the quantizer \eqref{eqn-3}, the range is $M_{j}\mu_{j}$ and the upper bound of the quantization error is $\Lambda_{j}\mu_{j}$, both of which depend on $\mu_{j}\in\mathbb{R}_{+}$ and thus are dynamic. For the quantizer \eqref{eqn-3}, the following assumption is satisfied; see \cite{Heemels2009networked, Nesic2009unified}.

\begin{assumption}
\label{asn-3}
The bound on the initial state $x_{\p}(t_{0})\in\mathbb{R}^{n_{\p}}$ is known \textit{a priori}. Both $x_{\p}(t_{0})$ and $\mu$ are such that the system state is in the quantization regions.
\end{assumption}

All the quantization parameters in $L$ nodes are combined as $\mu:=(\mu_{1}, \ldots, \mu_{L})\in\mathbb{R}^{L}_{+}$, and then the overall quantizer is defined as $q(\mu, z):=(q_{1}(\mu_{1}, z_{1}), \ldots, q_{L}(\mu_{L}, z_{L}))$. The evolution of $\mu\in\mathbb{R}^{L}_{+}$ will be given later. The quantized measurements are defined as $\bar{y}:=q(\mu, y)\in\mathbb{R}^{n_{y}}$ and $\bar{u}:=q(\mu, u)\in\mathbb{R}^{n_{u}}$, and thus the quantization errors are defined as $\epsilon_{y}:=\bar{y}-y\in\mathbb{R}^{n_{y}}$ and $\epsilon_{u}:=\bar{u}-u\in\mathbb{R}^{n_{u}}$.

The quantized measurements are transmitted via the network. The received information is defined as $\hat{y}$ and $\hat{u}$, and updated with the latest quantized measurements at the update times $r_{k}:=t_{k}+\tau_{k}$, $k\in\mathbb{N}^{+}$. Therefore, the received information is updated with the latest quantized measurements. That is, the chosen node to transmit the data packet is updated, and the unchosen nodes are kept the same as before. Here, assume that the packet dropouts may occur randomly; see \cite{Yang2006control, Wang2007robust, Tsumura2009tradeoffs}. For the updates of $\hat{y}$ and $\hat{u}$, two random variables are introduced to model the transmission status. $\alpha_{k}$ and $\beta_{k}$ are independent Bernoulli distributed white sequences with
\begin{align}
\label{eqn-4}
\begin{aligned}
\prb\{\alpha_{k}=1\}&=\E\{\alpha_{k}\}=\bar{\alpha}, \quad \prb\{\alpha_{k}=0\}=1-\bar{\alpha}, \\
\prb\{\beta_{k}=1\}&=\E\{\beta_{k}\}=\bar{\beta}, \quad \prb\{\beta_{k}=0\}=1-\bar{\beta},
\end{aligned}
\end{align}
where $\bar{\alpha}, \bar{\beta}\in(0, 1)$ are called the dropout rates from sensors to the controller and from the controller to actuators, respectively. Hence, at the arrival time $r_{k}\in\mathbb{R}_{+}$, $k\in\mathbb{N}_{+}$, the received information is updated as follows:
\begin{align}
\label{eqn-5}
\begin{aligned}
\hat{y}(r^{+}_{k})&=\alpha_{k}\Gamma^{y}_{\sigma_{k}}(y(t_{k})+\epsilon_{y}(t_{k}))+(I-\alpha_{k}\Gamma^{y}_{\sigma_{k}})\hat{y}(t_{k}),\\
\hat{u}(r^{+}_{k})&=\beta_{k}\Gamma^{u}_{\sigma_{k}}(u(t_{k})+\epsilon_{u}(t_{k}))+(I-\beta_{k}\Gamma^{u}_{\sigma_{k}})\hat{u}(t_{k}),
\end{aligned}
\end{align}
where $\Gamma_{y}(\sigma_{k})$ and $\Gamma_{u}(\sigma_{k})$ are diagonal matrices with the form $\diag\{\delta_{i, 1}, \ldots, \delta_{i, n}\}$, and $\sigma_{k}\in\mathfrak{L}:=\{1, \ldots, L\}$. $\delta_{i,j}=1$ if the node $j$ is granted to access to the network at $t_{i}$; otherwise, $\delta_{i, j}\equiv0$. In the update intervals, the received information is operated in zero order hold (ZOH) fashion. The errors induced by the network and the quantizer are defined as $e_{y}(t)=\hat{y}(t)-y(t)$, $e_{u}(t)=\hat{u}(t)-u(t)$.

Assume that the quantized measurements are operated in ZOH fashion on $(r_{i}, r_{i+1})$ and updated at $r_{i}$ with the update of $\mu\in\mathbb{R}^{L}_{+}$. The update of $\mu$ is given by
\begin{align*}
\mu(r^{+}_{k})&=(1-\min\{\alpha_{k}, \beta_{k}\})\mu(r_{k})+\min\{\alpha_{k}, \beta_{k}\}\Omega\mu(r_{k}),
\end{align*}
where $\Omega:=\diag\{\Omega_{1}, \ldots, \Omega_{L}\}$ with $\Omega_{j}\in(0, 1)$ for all $j\in\mathfrak{L}$. Therefore, $\mu$ is updated only when the information is transmitted such that the whole system is closed-loop, which is different from the deterministic case \cite{Van2013stability, Nesic2009unified, Elia2001stabilization}.

\subsection{Development of Discrete-time System Model}
\label{sec-problem}

Using discrete-time modelling approach, a stochastically parameter-varying discrete-time switched uncertain system model is developed here. To this end, define $\hat{u}_{k}:=\lim_{t\downarrow r_{k}}\hat{u}(t)$ and $\hat{y}_{k}:=\lim_{t\downarrow r_{k}}\hat{y}(t)$. $\hat{y}_{k-1}=\hat{y}(t_{k})$ and $\hat{u}_{k-1}=\hat{u}(t_{k})$ because of the left-continuity of $\hat{y}, \hat{u}$. Hence, the exact discretisation of the plant state is given by
\begin{align*}
&x_{\p}(t_{k+1})=e^{A_{\p}h_{k}}x_{\p}(t_{k})+\int^{h_{k}}_{0}e^{A_{\p}s}dsB_{\p}[u(t_{k})+e_{u}(t_{k})]\nonumber\\
&\quad +\beta_{k}\int^{h_{k}-\tau_{k}}_{0}e^{A_{\p}s}dsB_{\p}\Gamma_{u}(\sigma_{k})[\epsilon_{u}(t_{k})-e_{u}(t_{k})].
\end{align*}
Following the similar fashion, the discretisation of the controller \eqref{eqn-2} can be achieved. Define the augmented state $\bar{x}_{k}:=(x_{\p}(t_{k}), x_{\ct}(t_{k}), e_{y}(t_{k}),e_{u}(t_{k}))$, the controlled output $\bar{z}_{k}:=(y(t_{k}), u(t_{k}))$, and $\bar{\epsilon}_{k}:=(\epsilon_{y}(t_{k}), \epsilon_{u}(t_{k}))$. Therefore, the discrete-time model of the whole system is given by
\begin{align}
\label{eqn-6}
\begin{cases}
\bar{x}_{k+1}=\mathcal{A}_{\sigma_{k}, h_{k}, \tau_{k}}\bar{x}_{k}+\Upsilon\mathcal{B}_{\sigma_{k}, h_{k}, \tau_{k}}\bar{\epsilon}_{k} \\
\qquad \quad +\Upsilon_{k}\mathcal{B}_{\sigma_{k}, h_{k}, \tau_{k}}\bar{w}_{k},\\
\bar{z}_{k}=H_{\sigma_{k}}\bar{x}_{k},
\end{cases}
\end{align}
where $\bar{w}_{k}:=\bar{\epsilon}_{k}-\bar{e}_{k}$ is an auxiliary variable, $\mathcal{A}_{\sigma_{k}, h_{k}, \tau_{k}}\in\mathbb{R}^{n_{x}\times n_{x}}$, $\mathcal{B}_{\sigma_{k}, h_{k}, \tau_{k}}\in\mathbb{R}^{n_{x}\times n_{z}}$, $\mathcal{G}_{h_{k}}\in\mathbb{R}^{n_{x}\times n_{w}}$,  $H_{\sigma_{k}}\in\mathbb{R}^{n_{z}\times n_{x}}$ with $n_{x}=n_{\p}+n_{\ct}+n_{z}$, $n_{z}=n_{y}+n_{u}$ and
\begin{align*}
\mathcal{A}_{\sigma_{k}, h_{k}, \tau_{k}}&=\begin{bmatrix}\begin{smallmatrix}
A_{h_{k}}+E_{h_{k}}BDC & E_{h_{k}}BD-E_{h_{k}-\tau_{k}}B\bar{\Upsilon}\Gamma_{\sigma_{k}} \\
C(I-A_{h_{k}}-E_{h_{k}}BDC) & \mathcal{A}_{22}
\end{smallmatrix}\end{bmatrix}, \\
\mathcal{B}_{\sigma_{k}, h_{k}, \tau_{k}}&=\begin{bmatrix}
E_{h_{k}-\tau_{k}}B\Gamma_{\sigma_{k}} \\
D^{-1}\Gamma_{\sigma_{k}}-CE_{h_{k}-\tau_{k}}B\Gamma_{\sigma_{k}}
\end{bmatrix}, \\
H_{\sigma_{k}}&=\begin{bmatrix} DC & D-I \end{bmatrix}, A_{h_{k}}=\begin{bmatrix} e^{A_{\p}h_{k}} & 0 \\  0& e^{A_{\ct}h_{k}} \end{bmatrix}, \\
B&=\begin{bmatrix} 0 & B_{\p} \\ B_{\ct} & 0  \end{bmatrix}, C=\begin{bmatrix}C_{\p} & 0 \\  0& C_{\ct} \end{bmatrix},   D=\begin{bmatrix} I & 0 \\ D_{\ct} & I \end{bmatrix}, \\
\Upsilon&=\diag\{\bar{\Upsilon}, \bar{\Upsilon}\},  \quad \Upsilon_{k}=\diag\{\bar{\Upsilon}_{k}, \bar{\Upsilon}_{k}\}, \\
\Gamma_{\sigma_{k}}&=\diag\{\Gamma_{y}(\sigma_{k}), \Gamma_{u}(\sigma_{k})\}, \\
E_{\rho}&=\diag\left\{\int^{\rho}_{0}e^{A_{\p}s}ds, \int^{\rho}_{0}e^{A_{\ct}s}ds\right\},
\end{align*}
where $\mathcal{A}_{22}:=I-D^{-1}\bar{\Upsilon}\Gamma_{\sigma_{k}}+C(E_{h_{k}-\tau_{k}}B\bar{\Upsilon}\Gamma_{\sigma_{k}}-E_{h_{k}}BD)$.

\subsection{Time-scheduling Protocols}
\label{sec-protocols}

The time-scheduling protocols decide which node is granted to access to the network, and are introduced for the discrete-time system \eqref{eqn-6}. Here, two classes of the protocols are presented; see \cite{Nesic2009unified, Donkers2011stability, Donkers2012stability}.

\subsubsection{Quadratic Protocol}
For some given matrices $Q_{i}=Q^{\top}_{i}>0$, $i\in\mathfrak{L}$, with appropriate dimensions, if the time-scheduling protocol is of the form:
\begin{equation}
\label{eqn-7}
\sigma_{k}=\min\arg\max_{1\leq i\leq L}\begin{bmatrix}
\bar{x}_{k} \\
\bar{\epsilon}_{k}
\end{bmatrix}^{\top}Q_{i}\begin{bmatrix}
\bar{x}_{k} \\
\bar{\epsilon}_{k}
\end{bmatrix},
\end{equation}
then it is said to be a quadratic protocol. Based on $\bar{x}_{k}$ and $\bar{\epsilon}_{k}$, $Q_{i}$ in \eqref{eqn-7} can be rewritten as a block matrix with 4 blocks. In particular, if there exists more than one node such that the binomial \eqref{eqn-7} achieves its minimum, then the node with the smallest index is granted to access to the network. If $\sigma_{k}=\min\{\arg\max_{1\leq k\leq L}(e_{k}-\bar{\epsilon}_{k})^{\top}(e_{k}-\bar{\epsilon}_{k})\}$ with $e_{k}=(e_{y}(t_{k}), e_{u}(t_{k}))$, then the quadratic protocol is called Try-Once-Discard (TOD) protocol. In this case, the protocol depends on the network-induced error only.

\subsubsection{Periodic Protocol}
If there exists $N_{1}\in\mathbb{N}^{+}$ such that the time-scheduling protocol satisfies
\begin{equation}
\label{eqn-8}
\sigma_{k}=\sigma_{k+N_{1}}\in\mathfrak{L}, \quad  \forall k\in\mathbb{N}^{+},
\end{equation}
then it is called a periodic protocol, and $N_{1}$ is a period of the protocol. A special case of \eqref{eqn-8} is Round-Robin (RR) protocol, which is defined as the periodic protocol with $N_{1}=L$. RR protocol implies that each node has and only has one chance to access to the network in a period.

\subsection{Problem Statement}
\label{sec-stability}

The objective of this paper is to establish sufficient conditions such that, in the presence of the network and quantizer, the discrete-time system \eqref{eqn-6} is mean-square exponentially input-to-state stable and the $\mathcal{H}_{\infty}$ performance is achieved. To this end, the following definitions are presented.

\begin{definition}
\label{def:1}
The system \eqref{eqn-6} is \emph{exponentially mean-square input-to-state stable (EMSISS)}, if there exist $c_{1}, c_{2}>0$ and $\gamma_{1}>0$ such that for all initial condition $\bar{x}_{0}\in\mathbb{R}^{n_{x}}$ and a sequence of the random variables $\{(h_{k}, \tau_{k}): k\in\mathbb{N}^{+}\}$,
\begin{equation}
\label{eqn-9}
\E\{\|\bar{x}_{k}\|^{2}\}\leq c_{1}\|\bar{x}_{0}\|^{2}e^{-c_{2}k}+\gamma_{1}\sup_{i\in\{0, \ldots, k-1\}}\|\bar{\epsilon}_{i}\|^{2}.
\end{equation}
\end{definition}

\begin{definition}
\label{def:2}
The system \eqref{eqn-6} is said to achieve the \emph{$\mathcal{H}_{\infty}$ performance}, if for each sequence of the random variables $\{(h_{k}, \tau_{k}): k\in\mathbb{N}^{+}\}$,
\begin{equation}
\label{eqn-10}
\sum^{\infty}_{k=0}\E\{\|\bar{z}_{k}\|^{2}\}\leq c_{3}\|\bar{x}_{0}\|^{2}+\gamma_{2}\sum^{\infty}_{k=0}\|\bar{\epsilon}_{i}\|^{2},
\end{equation}
where $\gamma_{2}>0$ is the prescribed scalar.
\end{definition}

\section{Overapproximation of Discrete-Time System}
\label{sec-overapproximation}

Due to the existence of the stochastic uncertainties in the parameters $h_{k}$ and $\tau_{k}$, it is not easy to address the stability analysis by using the model \eqref{eqn-6}. To deal with this issue, the system \eqref{eqn-6} is further overapproximated by a polytopic system with norm-bounded uncertainties. The polytopic system is given below:
\begin{align}
\label{eqn-11}
\bar{x}_{k+1}&=\left[\sum^{N}_{n=1}\alpha_{kn}\bar{A}_{\sigma_{k}, n}+\bar{B}\Delta\bar{C}_{\sigma_{k}}\right]\bar{x}_{k} \nonumber \\
&\quad+\Upsilon\left[\sum^{N}_{n=1}\alpha_{kn}\bar{E}_{\sigma_{k}, n}+\bar{B}\Delta\bar{F}_{\sigma_{k}}\right]\bar{\epsilon}_{k}\nonumber \\
&\quad+\Upsilon_{k}\left[\sum^{N}_{n=1}\alpha_{kn}\bar{E}_{\sigma_{k}, n}+\bar{B}\Delta\bar{F}_{\sigma_{k}}\right]\bar{w}_{k},
\end{align}
where $N$ is the number of vertices of the polytopic and $\bar{A}_{\sigma, n}\in\mathbb{R}^{n_{x}\times n_{x}}$,
$\bar{B}\in\mathbb{R}^{n_{x}\times n_{1}}$, $\bar{C}_{\sigma}\in\mathbb{R}^{n_{1}\times n_{x}}$,
$\bar{E}_{\sigma, n}\in\mathbb{R}^{n_{x}\times n_{z}}$, and $\bar{F}_{\sigma}\in\mathbb{R}^{n_{1}\times n_{z}}$. For each $k\in\mathbb{N}^{+}$, the vector $\alpha_{k}=(\alpha_{k1}, \ldots, \alpha_{kN})^{\top}\in\mathscr{A}$ with $\mathscr{A}=\{\alpha\in\mathbb{R}^{N}: \sum^{N}_{n=1}\alpha_{n}=1, \alpha_{n}\geq0, n\in\mathfrak{N}:=\{1, \ldots, N\}\}$, $\Delta_{k}\in\bm{\Delta}$, where $\bm{\Delta}$ is a norm-bounded set to describe the additive uncertainty. To construct the polytopic system \eqref{eqn-11}, a procedure of overapproximating the discrete-time system \eqref{eqn-6} is presented below. This procedure is based on gridding and norm-bounding approach \cite{Sala2005computer, Skaf2009analysis, Donkers2011stability} and is a generalization of the procedure applied in \cite{Donkers2011stability, Donkers2012stability}.

\noindent\textbf{Procedure}
\begin{enumerate}
  \item Let the desired thresholds be $\varsigma>0$ and $\varpi^{\ast}>0$. Choose distinct points $(\bar{h}_{n}, \bar{\tau}_{n})\in\Theta$, $n\in\{1, \ldots, N\}$ to form a set $\mathcal{P}:=\{(\bar{h}_{n}, \bar{\tau}_{n}): n\in\{1, \ldots, N\}\}$ such that $\Theta$ is in the closure of $\mathcal{P}$. Based on the set $\mathcal{P}$, pick $M$ triangles $\mathcal{S}_{m}\in(\Theta+\delta\mathbb{B}^{\circ})$, $m\in\mathfrak{M}:=\{1, 2, \ldots, M\}$, such that
  \begin{enumerate}[(i)]
    \item for any $m_{1}, m_{2}\in\mathfrak{M}$ and $m_{1}\neq m_{2}$, $\prb\{(h, \tau)\in\mathcal{S}_{m_{1}}\cap\mathcal{S}_{m_{2}}\}=0$;
    \item $\Theta\subseteq\cup^{M}_{m=1}\mathcal{S}_{m}\subseteq(\Theta+\delta\mathbb{B}^{\circ})$;
    \item $\prb\{\cup^{M}_{m=1}\mathcal{S}_{m}\backslash\Theta\}\in[0, \varsigma]$.
  \end{enumerate}
  \item Computer the probability $\bar{p}_{m}=\prb\{(h, \tau)\in\mathcal{S}_{m}\}$. Define $\bar{A}_{\sigma n}:=\mathcal{A}_{\sigma, \bar{h}_{n}, \bar{\tau}_{n}}$, $\bar{E}_{\sigma n}:=\mathcal{B}_{\sigma, \bar{h}_{n}, \bar{\tau}_{n}}$, where $(\bar{h}_{n}, \bar{\tau}_{n})\in\mathcal{S}_{m}$ and $\sigma\in\mathfrak{L}$.
  \item To bound the approximation errors, the matrix $\bar{\Lambda}$ is constructed and decomposed to be Jordan form $\bar{\Lambda}=\diag\{A_{\p}, A_{\ct}\}=T\Lambda T^{-1}$, $\Lambda=\diag\{\Lambda_{1}, \ldots, \Lambda_{K}\}$, where $\Lambda_{i}\in\mathbb{R}^{n_{i}\times n_{i}}$ is the $i$-th Jordan block of the matrix $\bar{\Lambda}$.
  \item For each Jordan block $\Lambda_{i}$, compute the worst-case approximation errors: $\delta_{A, i}:=\max_{m\in\mathfrak{M}}\delta^{m}_{A, i}$, $\delta_{E_{h}, i}:=\max_{m\in\mathfrak{M}}\delta^{m}_{E_{h}, i}$, and $\delta_{E_{h-\tau}, i}:=\max_{m\in\mathfrak{M}}\delta^{m}_{E_{h-\tau}, i}$, with
  \begin{align*}
  \delta^{m}_{A, i}&:=\max_{\bar{\alpha}_{l}\in\mathscr{A}_{1}}\left\|e^{\Lambda_{i}\sum^{3}_{l=1}\bar{\alpha}_{l}\bar{h}_{ml}}
  -\sum^{3}_{l=1}\bar{\alpha}_{l}e^{\bar{h}_{ml}\Lambda_{i}}\right\|, \\
  \delta^{m}_{E_{h}, i}&:=\max_{\bar{\alpha}_{l}\in\mathscr{A}_{1}}\left\|\sum^{3}_{l=1}\bar{\alpha}_{l}\int^{\sum^{3}_{l=1}  \bar{\alpha}_{l}\bar{h}_{ml}}_{\bar{h}_{ml}}e^{\Lambda_{i}s}ds\right\|, \\
  \delta^{m}_{E_{h-\tau}, i}&:=\max_{\bar{\alpha}_{l}\in\mathscr{A}_{1}}\left\|\sum^{3}_{l=1}\bar{\alpha}_{l}\int^{\sum^{3}_{l=1}
  \bar{\alpha}_{l}(\bar{h}_{ml}-\bar{\tau}_{ml})}_{(\bar{h}_{ml}-\bar{\tau}_{ml})}e^{\Lambda_{i}s}ds\right\|,
  \end{align*}
  where $\mathscr{A}_{1}:=\{\alpha\in\mathbb{R}^{3}: \sum^{3}_{n=1}\alpha_{n}=1, \alpha_{n}\geq0\}$; $(\bar{h}_{ml}, \bar{\tau}_{ml})\in\mathcal{P}$ are the vertices of the triangle $\mathcal{S}_{m}$ with $m\in\{1, \ldots, M\}$ and $l\in\{1, 2, 3\}$.

  \item Define $\bar{B}:=\tilde{B}U$, $\tilde{B}:=\begin{bmatrix}\begin{smallmatrix}
  T & T & T \\
  -CT & -CT & -CT
  \end{smallmatrix}\end{bmatrix}$ and
  \begin{align*}
  U&:=\diag\{\delta_{A, 1}I_{1}, \ldots, \delta_{A, K}I_{K}, \delta_{E_{h}, 1}I_{1}, \ldots, \\
   &\quad  \delta_{E_{h}, K}I_{K}, \delta_{E_{h-\tau}, 1}I_{1}, \ldots, \delta_{E_{h-\tau}, K}I_{K}\},\\
  \bar{C}_{\sigma}&:=\begin{bmatrix}
  T^{-1} & 0 \\
  T^{-1}BDC & T^{-1}BD \\
  0 & -T^{-1}B\Gamma_{\sigma}
  \end{bmatrix},
  \bar{F}_{\sigma}:=\begin{bmatrix}  0 \\  0 \\  T^{-1}B\Gamma_{\sigma}\end{bmatrix},
  \end{align*}
  where the matrix $J_{\sigma}$ with appropriate dimension is additive and the identity matrix $I_{i}$ corresponds to the Jordan block $\Lambda_{i}$.
  \item Compute $\varpi:=\max_{\sigma\in\mathfrak{L}}\{\|\bar{B}\|\|\bar{C}_{\sigma}\|\}$. If $\varpi>\varpi^{\ast}$, then the set $\Theta$ need to be partitioned with a larger $M$. Otherwise, the tightness of the overapproximation is achieved. The additive uncertainty and the disturbance-induced uncertainty are bounded by the set:
  \begin{align*}
  \bm{\Delta}&:=\{\diag\{\Delta_{1}, \ldots, \Delta_{3K}\}: \Delta_{i+jK}\in\mathbb{R}^{n_{i}\times n_{i}}, \\
  &\quad \|\Delta_{i+j K}\|\leq1, i\in\{1, \ldots, K\}, j\in\{0, 1, 2\}\}.
  \end{align*}
\end{enumerate}

\begin{remark}
\label{rmk:6}
Our procedure of overapproximation is a generalization of but different from the one applied in \cite{Donkers2012stability}. First, two desired thresholds are specified in this paper, whereas only one threshold was given in \cite{Van2013stability, Donkers2012stability}. Although the choice of $(h_{k}, \tau_{k})$ is allowed to be more than $\Theta$, the degree of the exceeding needs to be limited. Thus, the threshold $\varsigma$ is to detect the degree of the exceeding part in the partition. Moreover, the combination of the thresholds $\varsigma$ and $\varpi^{\ast}$ is to detect the degree of the partition, whereas only $\varpi^{\ast}$ is used in \cite{Van2013stability, Donkers2012stability} to detect the tightness of the partition. Second, the coefficient matrix $J_{\sigma}$ in $\bar{H}_{\sigma}$ is added in Step 5 to scale the effects of the disturbance on each node. For every node, data transmission is impacted by the disturbance to different extent. If a node is chosen, then combining the threshold $\varpi^{\ast}$ and the coefficient matrix $J_{\sigma}$ leads to the boundedness of the disturbance-induced uncertainty and the compatibility of the norm-bounded uncertainty set $\bm{\bar{\Delta}}$. The coefficient matrix $J_{\sigma}$ can be specified \emph{a priori} according the ability of fault tolerance and the importance of each node.
\hfill $\square$
\end{remark}

Based on the above procedure, the following lemma is obtained. Its proof is a modification of the proof of Theorem \uppercase\expandafter{\romannumeral3}.2 in \cite{Donkers2011stability}, and hence omitted here.

\begin{lemma}
\label{lem-1}
The polytopic system obtained by \textbf{Procedure} is an overapproximation of the system \eqref{eqn-6}, that is,
\begin{align*}
&\left\{\begin{bmatrix}
\tilde{A}_{\sigma, h, \tau} & \tilde{B}_{\sigma, h, \tau} \\
\end{bmatrix}: (h, \tau)\in\Theta\right\}\subseteq\left\{\sum^{N}_{n=1}\alpha_{n}\begin{bmatrix}
\bar{A}_{\sigma n} & \bar{E}_{\sigma n} \\
\end{bmatrix} \right.\\
&\quad \left.+\bar{B}\Delta\begin{bmatrix}
\bar{C}_{\sigma} & \bar{F}_{\sigma} \\
\end{bmatrix}: \alpha\in\mathcal{A}, \Delta\in\bm{\Delta}, \sigma\in\mathfrak{L}\right\},
\end{align*}
where the set $\bm{\Delta}$ is given in Step 6 of \textbf{Procedure}.
\end{lemma}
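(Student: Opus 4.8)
The plan is to prove the inclusion pointwise: fix $\sigma\in\mathfrak{L}$ and $(h,\tau)\in\Theta$, and exhibit $\alpha\in\mathscr{A}$ and $\Delta\in\bm{\Delta}$ for which the data matrices $\mathcal{A}_{\sigma,h,\tau}$, $\mathcal{B}_{\sigma,h,\tau}$ of the system \eqref{eqn-6} (written $\tilde{A}_{\sigma,h,\tau}$, $\tilde{B}_{\sigma,h,\tau}$ in the statement) coincide with $\sum_{n=1}^{N}\alpha_{n}[\bar{A}_{\sigma n}\ \bar{E}_{\sigma n}]+\bar{B}\Delta[\bar{C}_{\sigma}\ \bar{F}_{\sigma}]$. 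By Step 1(ii) of \textbf{Procedure}, $(h,\tau)$ lies in some triangle $\mathcal{S}_{m}$, so expressing it in barycentric coordinates of the three vertices $(\bar{h}_{ml},\bar{\tau}_{ml})$ gives weights $\bar{\alpha}\in\mathscr{A}_{1}$ with $(h,\tau)=\sum_{l=1}^{3}\bar{\alpha}_{l}(\bar{h}_{ml},\bar{\tau}_{ml})$; placing these three weights at the indices of the corresponding grid points and zeros elsewhere yields the candidate $\alpha\in\mathscr{A}$.

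Next I would isolate the nonlinearity. The matrices $\mathcal{A}_{\sigma,h,\tau}$ and $\mathcal{B}_{\sigma,h,\tau}$ depend on $(h,\tau)$ only through $A_{h}=\diag\{e^{A_{\p}h},e^{A_{\ct}h}\}$, through $E_{h}$, and through $E_{h-\tau}$, while every other factor ($B,C,D,\Gamma_{\sigma},\bar{\Upsilon}$) is constant. Using the Jordan decomposition $\bar{\Lambda}=T\Lambda T^{-1}$ from Step 3, I would block-diagonalize $A_{h}=T\,\diag\{e^{\Lambda_{1}h},\ldots,e^{\Lambda_{K}h}\}\,T^{-1}$ and likewise route $E_{h},E_{h-\tau}$ through $T$-conjugated per-block integrals. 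The discrepancy between the true value at $(h,\tau)$ and its convex surrogate $\sum_{l}\bar{\alpha}_{l}(\cdot)$ then splits, block by block, into the three errors $e^{\Lambda_{i}h}-\sum_{l}\bar{\alpha}_{l}e^{\Lambda_{i}\bar{h}_{ml}}$, the integral residual $\sum_{l}\bar{\alpha}_{l}\int_{\bar{h}_{ml}}^{h}e^{\Lambda_{i}s}ds$, and the analogous $E_{h-\tau}$ residual, whose norms are bounded respectively by $\delta_{A,i}$, $\delta_{E_{h},i}$, $\delta_{E_{h-\tau},i}$ by the definitions in Step 4 (these being the maxima over $m$ and over $\bar{\alpha}\in\mathscr{A}_{1}$).

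I would then normalize: for each Jordan block and each of the three error types, divide the residual by its worst-case bound, collect the resulting unit-norm matrices into the block-diagonal $\Delta=\diag\{\Delta_{1},\ldots,\Delta_{3K}\}$, so that $\|\Delta_{i+jK}\|\le1$ and hence $\Delta\in\bm{\Delta}$. The scaling $U$ restores the magnitudes $\delta_{\cdot,i}$, the similarity factors $T$ (and $-CT$) are absorbed into $\tilde{B}$, and the factors $T^{-1}$ together with the structural couplings $BDC,BD,-B\Gamma_{\sigma}$ are absorbed into $\bar{C}_{\sigma}$ and $\bar{F}_{\sigma}$ exactly as defined in Step 5. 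Substituting $A_{h}=\sum_{l}\bar{\alpha}_{l}A_{\bar{h}_{ml}}+(\text{error})$ and the analogous expansions for $E_{h},E_{h-\tau}$ into $\mathcal{A}_{\sigma,h,\tau}$ and $\mathcal{B}_{\sigma,h,\tau}$, the convex parts reassemble into $\sum_{n}\alpha_{n}[\bar{A}_{\sigma n}\ \bar{E}_{\sigma n}]$, while all error parts collect into $\bar{B}\Delta[\bar{C}_{\sigma}\ \bar{F}_{\sigma}]$, giving the claimed identity.

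The main obstacle is the factorization bookkeeping: one must verify that a single block-diagonal $\Delta$ simultaneously routes all three error families, across all $K$ Jordan blocks, into their correct positions in both the state matrix $\mathcal{A}_{\sigma,h,\tau}$ (including the off-diagonal term $-E_{h-\tau}B\bar{\Upsilon}\Gamma_{\sigma}$ and the full controller row through $\mathcal{A}_{22}$) and the input matrix $\mathcal{B}_{\sigma,h,\tau}$ — the point being that the same $\Delta$ must serve the $\bar{C}_{\sigma}$ and $\bar{F}_{\sigma}$ columns because the state and input matrices are built from the identical exponential and integral blocks. This is precisely where the argument of \cite[Theorem III.2]{Donkers2011stability} has to be modified to accommodate the quantization data ($\Gamma_{\sigma}$, $D^{-1}$) and the delayed integrator $E_{h-\tau}$; once the sparsity pattern of $\bar{C}_{\sigma},\bar{F}_{\sigma}$ is checked to match these insertions, the inclusion follows, and since $(h,\tau)\in\Theta$ and $\sigma\in\mathfrak{L}$ were arbitrary, the set inclusion of the lemma is established.
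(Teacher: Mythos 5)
Your proposal is correct and follows exactly the route the paper itself indicates: the paper omits the proof, stating only that it is a modification of Theorem III.2 of Donkers et al., and your argument (barycentric coordinates within a triangle of the gridding, Jordan-block decomposition of the exponential and integral terms, normalization of the per-block residuals by the worst-case bounds $\delta_{A,i},\delta_{E_{h},i},\delta_{E_{h-\tau},i}$ into a unit-norm-bounded $\Delta$, and absorption of $T$, $-CT$, $T^{-1}$ and the structural factors into $\bar{B}$, $\bar{C}_{\sigma}$, $\bar{F}_{\sigma}$) is precisely that modification, correctly extended to the quantization and delay terms. No gaps beyond the routine bookkeeping you already flag.
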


Based on Lemma \uppercase\expandafter{\romannumeral2}. 4 in \cite{Donkers2011stability}, the stability of the discrete-time system \eqref{eqn-6} implies the stability of the continuous-time systems given by \eqref{eqn-1}--\eqref{eqn-2}. According to Lemma \ref{lem-1}, the stability of the polytopic system \eqref{eqn-11} leads to the stability of the discrete-time system \eqref{eqn-6}. Therefore, to guarantee the stability of the continuous-time systems given by \eqref{eqn-1}--\eqref{eqn-2}, what we need do next is to derive sufficient conditions for the stability of the polytopic system \eqref{eqn-11}.

\section{Stability and $\mathcal{H}_{\infty}$ Performance Analysis}
\label{sec-stabilityanalysis}

In this section, using the overapproximation in Section \ref{sec-overapproximation}, sufficient conditions for the stochastic stability properties of the system \eqref{eqn-11} are derived for different time-scheduling protocol and quantizer cases. To this end, the following lemma is given, which states the sufficient conditions to establish EMSISS of the discrete-time system \eqref{eqn-11}.

\begin{lemma}
\label{lem-2}
Consider the system \eqref{eqn-13} and let Assumption \ref{asn-1} hold. If there exist a Lyapunov function $V: \mathbb{R}^{n_{x}}\times\mathbb{N}\rightarrow\mathbb{R}_{+}$ and constants $a_{1}, a_{2}, a_{3}, a_{4}, a_{5}>0$ with $a_{2}\geq a_{3}\geq2a_{5}$, such that for all $\bar{x}_{k}\in\mathbb{R}^{n_{x}}$, $\bar{\epsilon}_{k}\in\mathbb{R}^{n_{z}}$, $k\in\mathbb{N}_{+}$,
\begin{align}
\label{eqn-12}
a_{1}\|\bar{x}_{k}\|^{2}\leq V(\bar{x}_{k}, k)&\leq a_{2}\|\bar{x}_{k}\|^{2},\\
\label{eqn-13}
\E\{V(\bar{x}_{k+1}, k+1)\}&\leq V(\bar{x}_{k}, k)-a_{3}\|\bar{x}_{k}\|^{2} \nonumber \\
&\quad +a_{4}\|\bar{\epsilon}_{k}\|^{2}+a_{5}\|\bar{w}_{k}\|^{2},
\end{align}
then the system \eqref{eqn-13} is EMSISS with $c_{1}=a^{-1}_{1}a_{2}$, $c_{2}=-\ln(1-a^{-1}_{2}(a_{3}-2a_{5}))$ and $\gamma_{1}=a^{-1}_{1}(a_{3}-2a_{5})^{-1}a_{2}(a_{4}+2a_{5})$.
\end{lemma}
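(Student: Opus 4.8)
The plan is to run a standard stochastic Lyapunov iteration: I would convert the one-step decrease estimate \eqref{eqn-13} into a geometric recursion for $\E\{V(\bar{x}_k,k)\}$ and then read off $c_1,c_2,\gamma_1$ through the sandwich bound \eqref{eqn-12}. The only genuinely nontrivial ingredient is disposing of the auxiliary term $a_5\|\bar{w}_k\|^2$, and this is exactly what forces the hypothesis $a_2\ge a_3\ge 2a_5$.

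First I would exploit the structure of the augmented state. Since $\bar{x}_k=(x_{\p}(t_k),x_{\ct}(t_k),e_y(t_k),e_u(t_k))$ while the auxiliary variable is $\bar{w}_k=\bar{\epsilon}_k-\bar{e}_k$ with $\bar{e}_k=(e_y(t_k),e_u(t_k))$, the vector $\bar{e}_k$ is a sub-block of $\bar{x}_k$, so $\|\bar{e}_k\|\le\|\bar{x}_k\|$. Hence $\|\bar{w}_k\|^2\le(\|\bar{\epsilon}_k\|+\|\bar{e}_k\|)^2\le 2\|\bar{\epsilon}_k\|^2+2\|\bar{x}_k\|^2$. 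Substituting this into \eqref{eqn-13} collapses the three right-hand terms into
\[
\E\{V(\bar{x}_{k+1},k+1)\mid\mathcal{F}_k\}\le V(\bar{x}_k,k)-(a_3-2a_5)\|\bar{x}_k\|^2+(a_4+2a_5)\|\bar{\epsilon}_k\|^2,
\]
which already explains the appearance of $a_3-2a_5$ and $a_4+2a_5$ in the claimed constants. I read \eqref{eqn-13} as a conditional expectation given the history $\sigma$-algebra $\mathcal{F}_k$, which is legitimate because every term on the right-hand side is $\mathcal{F}_k$-measurable.

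Next I would use the upper bound of \eqref{eqn-12} in the form $\|\bar{x}_k\|^2\ge a_2^{-1}V(\bar{x}_k,k)$ on the decrease term to obtain the contraction
\[
\E\{V(\bar{x}_{k+1},k+1)\mid\mathcal{F}_k\}\le\rho\,V(\bar{x}_k,k)+(a_4+2a_5)\|\bar{\epsilon}_k\|^2,\qquad \rho:=1-a_2^{-1}(a_3-2a_5).
\]
Here $a_3\ge 2a_5$ gives $\rho\le 1$ and $a_3\le a_2$ together with $a_5>0$ gives $\rho\ge 0$, so $\rho\in[0,1)$ and $c_2=-\ln\rho$ is well defined. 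Taking total expectations (tower property), writing $W_k:=\E\{V(\bar{x}_k,k)\}$, and unrolling the scalar recursion $W_{k+1}\le\rho W_k+(a_4+2a_5)\E\{\|\bar{\epsilon}_k\|^2\}$ yields
\[
W_k\le\rho^{k}W_0+(a_4+2a_5)\sum_{j=0}^{k-1}\rho^{\,k-1-j}\,\E\{\|\bar{\epsilon}_j\|^2\}.
\]
Bounding each disturbance term by the running supremum $\sup_{i\in\{0,\dots,k-1\}}\|\bar{\epsilon}_i\|^2$, summing the geometric series to $\tfrac{1}{1-\rho}=\tfrac{a_2}{a_3-2a_5}$, noting $\rho^{k}=e^{-c_2 k}$, and finally applying the outer bounds $W_0\le a_2\|\bar{x}_0\|^2$ and $a_1\E\{\|\bar{x}_k\|^2\}\le W_k$ from \eqref{eqn-12} reproduces \eqref{eqn-9} with exactly $c_1=a_1^{-1}a_2$, $c_2=-\ln\rho$ and $\gamma_1=a_1^{-1}(a_3-2a_5)^{-1}a_2(a_4+2a_5)$.

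I expect the main obstacle to be the $\bar{w}_k$-to-state bound rather than the iteration: everything downstream is the textbook geometric estimate, but the argument only closes because $\bar{e}_k$ sits inside $\bar{x}_k$, and it is precisely the induced $-2a_5\|\bar{x}_k\|^2$ correction that must remain dominated by the decrease rate $a_3$ — hence the sign condition $a_3\ge 2a_5$ (which should in fact be strict to obtain a genuine exponential rate $c_2>0$). A secondary point requiring care is the expectation bookkeeping: one must keep \eqref{eqn-13} as a conditional estimate, invoke the tower property before unrolling, and justify pulling the supremum of $\|\bar{\epsilon}_i\|^2$ outside the expectation, consistent with the deterministic supremum appearing in Definition \ref{def:1}.
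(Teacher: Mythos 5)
Your proof is correct and follows essentially the same route as the paper's: absorb $a_{5}\|\bar{w}_{k}\|^{2}$ via the triangle inequality and the fact that $\bar{e}_{k}$ is a sub-block of $\bar{x}_{k}$, obtain the contraction with factor $\kappa=1-a_{2}^{-1}(a_{3}-2a_{5})$, iterate, and read off $c_{1}, c_{2}, \gamma_{1}$ from \eqref{eqn-12}. You merely make explicit two points the paper leaves implicit — the conditional-expectation bookkeeping and the fact that $a_{3}>2a_{5}$ must hold strictly for $\kappa<1$ — both of which are accurate observations.
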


\begin{proof}
From \eqref{eqn-12}-\eqref{eqn-13} and the triangle inequality, one has
\begin{align}
\label{eqn-14}
\E\{V(\bar{x}_{k+1}, k+1)\}&\leq(1-a^{-1}_{2}(a_{3}-2a_{5}))V(\bar{x}_{k}, k) \nonumber \\
&\quad +(a_{4}+2a_{5})\|\bar{\epsilon}_{k}\|^{2}.
\end{align}
Let $\kappa=1-a^{-1}_{2}(a_{3}-2a_{5})\in[0, 1)$. Iterating \eqref{eqn-14} from $k_{0}=0$ to $k+1$ implies that
\begin{align}
&\E\{V(\bar{x}_{k+1}, k+1)\}\nonumber\\
&\leq\kappa^{k+1}V(\bar{x}_{0}, 0)+\sum^{k}_{i=0}\kappa^{i}(a_{4}+2a_{5})\|\bar{\epsilon}_{i}\|^{2}\nonumber
\end{align}
\begin{align}
\label{eqn-15}
&\leq\kappa^{k+1}a_{2}\|\bar{x}_{0}\|^{2}+\frac{a_{4}+2a_{5}}{1-\kappa}\sup_{i\in\{1, \ldots, k\}}\|\bar{\epsilon}_{i}\|^{2}.
\end{align}
We have from \eqref{eqn-12} and \eqref{eqn-15} that
\begin{align*}
\E\{\|\bar{x}_{k+1}\|^{2}\}&\leq a^{-1}_{1}a_{2}\kappa^{k+1}\|\bar{x}_{0}\|^{2} \\
&\quad +(1-\kappa)^{-1}a^{-1}_{1}(a_{4}+2a_{5})\sup_{i\in\{0, \ldots, k\}}\|\bar{\epsilon}_{i}\|^{2},
\end{align*}
which implies that the system \eqref{eqn-13} is EMSISS.
\end{proof}

In the following, sufficient conditions for EMSISS of the system \eqref{eqn-11} are derived for different time-scheduling protocols. To begin with, some auxiliary sets are introduced. Define the class of stochastic vectors as $\mathscr{U}:=\{\vartheta\in\mathbb{R}^{M}: \sum^{M}_{i=1}\vartheta_{i}=1, \vartheta_{i}\geq0, i\in\mathfrak{M}\}$. Based on the uncertainty set $\bm{\Delta}$, define $\mathscr{R}:=\{\diag\{o_{1}I_{1}, \ldots, o_{iK+j}I_{iK+j}, \ldots, o_{3K}I_{3K}\}: o_{iK+j}>0, i\in\{0, 1, 2\}, j\in\{1, 2, \ldots, K\}\}$, where the identity matrix $I_{iK+j}$ corresponds to the real Jordan matrix $\bar{\Lambda}_{iK+j}$ in Step 3 of \textbf{Procedure} with the same size.

\subsection{The Quadratic Protocol Case}
\label{sec-quadratic}

For the quadratic protocols, the candidate Lyapunov function is chosen to be
\begin{equation}
\label{eqn-16}
V(\bar{x}_{k})=\bar{x}^{\top}_{k}P_{i}\bar{x}_{k},
\end{equation}
where $i=\min\{j\in\mathfrak{L}: (\bar{x}_{k}, \bar{\epsilon}_{k})\in\Psi_{j}\}$ with $\Psi_{j}$ defined as
\begin{align*}
\left\{\begin{bmatrix}
\bar{x}_{k} \\
\bar{\epsilon}_{k}
\end{bmatrix}\in\mathbb{R}^{n_{x}+n_{z}}:\begin{bmatrix}
\bar{x}_{k} \\
\bar{\epsilon}_{k}
\end{bmatrix}^{\top}(Q_{i}-Q_{l})\begin{bmatrix}
\bar{x}_{k} \\
\bar{\epsilon}_{k}
\end{bmatrix}\leq0, \forall l\in\mathfrak{L}\right\}.
\end{align*}

\begin{figure*}[!b]
\normalsize
\hrulefill
\begin{align}
\label{eqn-17}
\Omega_{imn}:=\begin{bmatrix}\begin{smallmatrix}
\Lambda_{i} & \star & \star& \star & \star & \star & \star  & \star & \star & \star & \star\\
-\varrho_{6m}\bar{Q}^{i}_{12} & \Pi_{i} & \star & \star & \star & \star & \star & \star & \star & \star & \star \\
0 & 0 & -a_{5}\varrho_{5m}I & \star & \star & \star & \star& \star & \star & \star & \star \\
\sqrt{\bar{p}_{m}}H_{i} & 0 & 0 & -I & \star & \star & \star & \star & \star & \star & \star\\
\sqrt{\bar{p}_{m}}P_{j}\bar{A}_{in} & \sqrt{\bar{p}_{m}}P_{j}\boldsymbol{\Upsilon}\bar{E}_{in} & 0   & 0  & -P_{j} & \star & \star & \star & \star & \star & \star \\
0 & \sqrt{\bar{p}_{m}}P_{j}\boldsymbol{\Upsilon}\bar{E}_{in} & 0 & 0 & 0  & -P_{j} & \star & \star & \star & \star & \star\\
0 & 0 & \sqrt{\bar{p}_{m}}P_{j}\boldsymbol{\Upsilon}\bar{E}_{in} & 0 & 0 & 0 & -\bar{P}_{i} & \star & \star & \star & \star\\
\sqrt{\bar{p}_{m}}\bar{C}_{i}& 0 & 0& 0& 0& 0& 0 & -\rho I & \star & \star & \star\\
0 & \sqrt{\bar{p}_{m}}\bar{F}_{i}& \sqrt{\bar{p}_{m}}\bar{F}_{i} & 0 & 0& 0 & 0& 0 & -\rho I & \star & \star  \\
0& 0 &0& 0 & \rho\bar{B}^{\top}P_{j} & 0 & 0& 0& 0 & -\rho I & \star \\
0& 0 & 0 & 0 & \rho\bar{B}^{\top}\boldsymbol{\Upsilon} P_{j} & \rho\bar{B}^{\top}\boldsymbol{\Upsilon} P_{j} & \rho\bar{B}P_{j} & 0& 0& 0 & -\rho I
\end{smallmatrix}\end{bmatrix},
\end{align}
\end{figure*}

\begin{theorem}
\label{thm-1}
Let Assumptions \ref{asn-1}-\ref{asn-3} hold and the probability distribution for $(h, \tau)$ be given. If there exist $P_{i}=P^{\top}_{i}>0$ with $i\in\mathfrak{L}$, $\varrho_{1}, \varrho_{2}, \varrho_{3}, \varrho_{4}, \varrho_{5}, \varrho_{6}\in\mathscr{U}$ and $\rho>0$ such that \eqref{eqn-17} holds for all $i\in\mathfrak{L}$, $m\in\mathfrak{M}$, $n\in\mathfrak{N}$, where $\Lambda_{i}:=-\varrho_{1m}P_{i}-\varrho_{6m}\bar{Q}^{i}_{11}+\varrho_{5m}\gamma_{3}I$, $\Pi_{i}:=-\varrho_{3m}\gamma_{1}I-\varrho_{6m}\bar{Q}^{i}_{22}$, $\bar{Q}^{i}_{11}=\sum^{L}_{l=1}\zeta_{i, l}(Q^{i}_{11}-Q^{l}_{11})$, $\bar{Q}^{i}_{12}=\bar{Q}^{i\top}_{21}=\sum^{L}_{l=1}\zeta_{i, l}(Q^{i}_{12}-Q^{l}_{12})$, $\bar{Q}^{i}_{22}=\sum^{L}_{l=1}\zeta_{i, l}(Q^{i}_{22}-Q^{l}_{22})$, $\bar{P}_{j}=\boldsymbol{\Upsilon}\odot P_{j}$, $\boldsymbol{\Upsilon}=\diag\{(1-\bar{\alpha})\bar{\alpha}I, (1-\bar{\beta})\bar{\beta}I, (1-\bar{\alpha})\bar{\alpha}I, (1-\bar{\beta})\bar{\beta}I\}$, then the system \eqref{eqn-11} with the quadratic protocol is EMSISS and the $\mathcal{H}_{\infty}$ performance is achieved.
\end{theorem}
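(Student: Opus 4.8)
The plan is to reduce the statement to Lemma \ref{lem-2} by exhibiting the switched quadratic function \eqref{eqn-16} as a valid Lyapunov certificate, and then to read EMSISS and the $\mathcal{H}_{\infty}$ bound off the single matrix inequality \eqref{eqn-17}. The sandwich bound \eqref{eqn-12} is immediate with $a_{1}=\min_{i\in\mathfrak{L}}\lambda_{\min}(P_{i})$ and $a_{2}=\max_{i\in\mathfrak{L}}\lambda_{\max}(P_{i})$, uniform over the active branch $i$. The real content is the one-step decrease \eqref{eqn-13}, augmented by an extra $-\,\E\{\|\bar{z}_{k}\|^{2}\}$ term that will later deliver \eqref{eqn-10}.

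First I would expand $\E\{V(\bar{x}_{k+1})\}$ along the dynamics \eqref{eqn-11}, where the expectation is over three independent sources: the draw of $(h_{k},\tau_{k})$, the norm-bounded block $\Delta_{k}\in\bm{\Delta}$, and the Bernoulli indicators $\alpha_{k},\beta_{k}$. Conditioning on the triangle $\mathcal{S}_{m}$ that contains the sample $(h_{k},\tau_{k})$ replaces the $(h,\tau)$-average by the convex mixture $\sum_{m}\bar{p}_{m}(\cdot)$, which is the origin of the factors $\sqrt{\bar{p}_{m}}$ attached to every coupling block of \eqref{eqn-17}; within a fixed triangle the polytopic weights $\alpha_{kn}$ multiply the vertex data $\bar{A}_{in},\bar{E}_{in}$, so imposing \eqref{eqn-17} at each vertex $n\in\mathfrak{N}$ covers the whole convex hull. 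The delicate step, which I expect to be the main obstacle, is the average over $\alpha_{k},\beta_{k}$: writing $\Upsilon_{k}=\Upsilon+\tilde{\Upsilon}_{k}$ with $\E\{\tilde{\Upsilon}_{k}\}=0$, the cross terms vanish in mean while the second moment $\E\{\Upsilon_{k}^{\top}P_{j}\Upsilon_{k}\}$ produces, through the Bernoulli identities $\E\{\alpha_{k}^{2}\}-\bar{\alpha}^{2}=(1-\bar{\alpha})\bar{\alpha}$ and $\E\{\beta_{k}^{2}\}-\bar{\beta}^{2}=(1-\bar{\beta})\bar{\beta}$, exactly the variance-weighted term $\bar{P}_{j}=\boldsymbol{\Upsilon}\odot P_{j}$. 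This is where the mean-square (rather than deterministic) nature of the bound enters, and it is what separates the three $P$-channels in rows $5$--$7$ of \eqref{eqn-17}: the mean dynamics against $-P_{j}$ and the two variance channels against $-P_{j}$ and $-\bar{P}_{i}$.

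It then remains to impose the side constraints and convert the quadratic form in $(\bar{x}_{k},\bar{\epsilon}_{k},\bar{w}_{k})$ into LMI form. The active branch is characterized by inequalities $(\bar{x}_{k},\bar{\epsilon}_{k})^{\top}(Q_{i}-Q_{l})(\bar{x}_{k},\bar{\epsilon}_{k})\leq0$; by the S-procedure, nonnegative multipliers $\zeta_{i,l}$ and summation over $l\in\mathfrak{L}$ assemble the blocks $\bar{Q}^{i}_{11},\bar{Q}^{i}_{12},\bar{Q}^{i}_{22}$ inside $\Lambda_{i}$ and $\Pi_{i}$, so the certificate is required only on the cone where branch $i$ is selected. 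The norm-bounded uncertainty $\|\Delta_{k}\|\leq1$ is handled by the completion-of-squares estimate $\bar{B}\Delta\bar{C}+\bar{C}^{\top}\Delta^{\top}\bar{B}^{\top}\preceq\rho\,\bar{B}\bar{B}^{\top}+\rho^{-1}\bar{C}^{\top}\bar{C}$, whose scalar $\rho>0$ is the multiplier carried by the blocks with $-\rho I$, while the stochastic vectors $\varrho_{1m},\dots,\varrho_{6m}\in\mathscr{U}$ convexify the per-triangle certificates. Taking Schur complements of the $P_{j}$- and uncertainty-terms into separate coordinates produces precisely \eqref{eqn-17}; its negativity yields \eqref{eqn-13} with explicit constants (the gap $a_{3}-2a_{5}>0$ being guaranteed by $a_{2}\geq a_{3}\geq2a_{5}$ together with the elementary bound $\|\bar{w}_{k}\|^{2}\leq2\|\bar{x}_{k}\|^{2}+2\|\bar{\epsilon}_{k}\|^{2}$ used in Lemma \ref{lem-2}). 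Lemma \ref{lem-2} then gives EMSISS. Finally, since block row $4$ of \eqref{eqn-17} encodes $-\|H_{i}\bar{x}_{k}\|^{2}=-\|\bar{z}_{k}\|^{2}$, the same inequality carries an additional $-\,\E\{\|\bar{z}_{k}\|^{2}\}$; summing over $k\in\mathbb{N}$, the Lyapunov terms telescope and the nonnegativity of $V$ gives \eqref{eqn-10}, establishing the $\mathcal{H}_{\infty}$ performance.
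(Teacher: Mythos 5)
Your proposal is correct and follows essentially the same route as the paper's proof: the switched quadratic Lyapunov function with uniform eigenvalue bounds, the combined dissipation inequality carrying both the EMSISS terms and $-\|\bar z_k\|^2$, the triangulation/probability splitting via $\bar p_m$ and the multipliers $\varrho_{im}\in\mathscr{U}$, the S-procedure with $\zeta_{i,l}$ for the protocol regions, the scalar-$\rho$ bound on the norm-bounded uncertainty, and Schur complements to arrive at \eqref{eqn-17} before invoking Lemma \ref{lem-2}. Your explicit variance decomposition $\Upsilon_k=\Upsilon+\tilde\Upsilon_k$ with the Bernoulli second-moment identities is a welcome elaboration of a step the paper leaves implicit in the appearance of $\bar P_j=\boldsymbol{\Upsilon}\odot P_j$, but it is the same argument.
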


\begin{proof}
First, define $a_{1}:=\min_{i\in\mathfrak{L}}\lambda_{\min}(P_{i})$ and $a_{2}:=\max_{i\in\mathfrak{L}}\lambda_{\max}(P_{i})$. From the Lyapunov function \eqref{eqn-16},
\begin{equation}
\label{eqn-18}
a_{1}\|\bar{x}_{k}\|^{2}\leq V(\bar{x}_{k})\leq a_{2}\|\bar{x}_{k}\|^{2}.
\end{equation}

Next is to prove the dissipative property of the Lyapunov function \eqref{eqn-16} for all $(\bar{x}_{k}, \bar{\epsilon}_{k})\in\Psi_{i}$, $\bar{x}_{k+1}\in\mathbb{R}^{n_{x}}$ and $i\in\mathfrak{L}$. To analyze the EMSISS property, it is sufficient to show that there exist $a_{3}, a_{4}, a_{5}>0$ and $\zeta_{i, l}\geq0$ such that
\begin{align}
\label{eqn-19}
&\E\{V(\bar{x}_{k+1})\}-V(\bar{x}_{k})\leq-a_{3}\|\bar{x}_{k}\|^{2}+a_{4}\|\bar{\epsilon}_{k}\|^{2} \nonumber \\
&\quad +a_{5}\|\bar{\omega}_{k}\|^{2}+\sum^{L}_{l=i}\zeta_{i, l}\begin{bmatrix}
\bar{x}_{k} \\
\bar{\epsilon}_{k}
\end{bmatrix}^{\top}(Q_{i}-Q_{l})\begin{bmatrix}
\bar{x}_{k} \\
\bar{\epsilon}_{k}
\end{bmatrix}.
\end{align}
To study the $\mathcal{H}_{\infty}$ performance, it suffices to prove (see \cite{Wang2007robust})
\begin{equation}
\label{eqn-20}
\E\{V(\bar{x}_{k+1})\}-V(\bar{x}_{k})\leq-\|\bar{z}_{k}\|^{2}+\gamma_{2}\|\bar{\epsilon}_{k}\|^{2}.
\end{equation}
Combining \eqref{eqn-19} and \eqref{eqn-20} and consider the external disturbances, we obtain that, to analyze both the EMSISS property and the $\mathcal{H}_{\infty}$ performance, it suffices to prove
\begin{align}
\label{eqn-21}
&\E\{V(\bar{x}_{k+1})\}-V(\bar{x}_{k})\leq-\|\bar{z}_{k}\|^{2}-a_{3}\|\bar{x}_{k}\|^{2}+a_{4}\|\bar{\epsilon}_{k}\|^{2} \nonumber  \\
&\quad +a_{5}\|\bar{\omega}_{k}\|^{2}+\sum^{L}_{l=i}\zeta_{i, l}\begin{bmatrix}
\bar{x}_{k} \\
\bar{\epsilon}_{k}
\end{bmatrix}^{\top}(Q_{i}-Q_{l})\begin{bmatrix}
\bar{x}_{k} \\
\bar{\epsilon}_{k}
\end{bmatrix}.
\end{align}
Therefore, $\gamma_{2}=a_{4}+a_{5}$ if the $\mathcal{H}_{\infty}$ performance is achieved.

Based on the discrete-time system \eqref{eqn-6} and the candidate Lyapunov function \eqref{eqn-16}, it holds that in the case of no disturbance,
\begin{align*}
&\E\{V(\bar{x}_{k+1})\}=\E\{\mathfrak{X}^{\top}_{k}\Psi^{\top}_{i}(h_{k}, \tau_{k})P_{j}\Psi_{i}(h_{k}, \tau_{k})\mathfrak{X}_{k}\} \\
&\leq\sum^{M}_{m=1}\bar{p}_{m}\max_{(h_{k}, \tau_{k})\in\mathcal{S}_{m}}\mathfrak{X}^{\top}_{k}\Psi^{\top}_{i}(h_{k}, \tau_{k})P_{j}\Psi_{i}(h_{k}, \tau_{k})\mathfrak{X}_{k},
\end{align*}
where $\Psi^{\top}_{i}(h_{k}, \tau_{k}):=(\mathcal{A}_{\sigma_{k}, h_{k}, \tau_{k}}, \Upsilon\mathcal{B}_{\sigma_{k}, h_{k}, \tau_{k}}, \Upsilon_{k}\mathcal{B}_{\sigma_{k}, h_{k}, \tau_{k}})$. To guarantee \eqref{eqn-21} for all $\bar{x}_{k}\in\mathbb{R}^{n_{x}}$, $\bar{\epsilon}_{k}\in\mathbb{R}^{n_{x}}$, $k\in\mathbb{N}^{+}$, it suffices to show that there exists $\zeta_{i, l}\geq0$ such that for all $(\bar{h}_{m}, \bar{\tau}_{m})\in\mathcal{S}_{m}$, $m\in\mathfrak{M}$ and all $i\in\mathfrak{L}$,
\begin{align*}
&\sum^{M}_{m=1}\bar{p}_{m}\mathfrak{X}^{\top}_{k}\Psi^{\top}_{i}(h_{k}, \tau_{k})P_{j}\Psi_{i}(h_{k}, \tau_{k})\mathfrak{X}_{k}\leq\bar{x}^{\top}_{k}P_{i}\bar{x}_{k}-\|\bar{z}_{k}\|^{2}   \\
&\quad -a_{3}\|\bar{x}_{k}\|^{2}+a_{4}\|\bar{\epsilon}_{k}\|^{2}+a_{5}\|\bar{\omega}_{k}\|^{2}  \\
&\quad  +\sum^{L}_{l=i}\zeta_{i, l}
\begin{bmatrix}
\bar{x}_{k} \\
\bar{\epsilon}_{k}
\end{bmatrix}^{\top}\begin{bmatrix}Q^{i}_{11}-Q^{l}_{11} & Q^{i}_{12}-Q^{l}_{12} \\ \star& Q^{i}_{22}-Q^{l}_{22}\end{bmatrix}\begin{bmatrix}
\bar{x}_{k} \\
\bar{\epsilon}_{k}
\end{bmatrix},
\end{align*}
which holds if
\begin{align}
\label{eqn-22}
&\bar{p}_{m}\mathfrak{X}^{\top}_{k}\Psi^{\top}_{i}(h_{k}, \tau_{k})P_{j}\Psi_{i}(h_{k}, \tau_{k})\mathfrak{X}_{k}\leq\varrho_{1m}\bar{x}^{\top}_{k}P_{i}\bar{x}_{k} \nonumber\\
&-\varrho_{2m}\|\bar{z}_{k}\|^{2}-a_{3}\varrho_{3m}\|\bar{x}_{k}\|^{2}
+a_{4}\varrho_{4m}\|\bar{\epsilon}_{k}\|^{2}+a_{5}\varrho_{5m}\|\bar{\omega}_{k}\|^{2} \nonumber \\
& +\varrho_{6m}\begin{bmatrix}
\bar{x}_{k} \\
\bar{\epsilon}_{k}
\end{bmatrix}^{\top}\begin{bmatrix}\bar{Q}^{i}_{11} & \bar{Q}^{i}_{12} \\ \star& \bar{Q}^{i}_{22}\end{bmatrix}\begin{bmatrix}
\bar{x}_{k} \\
\bar{\epsilon}_{k}
\end{bmatrix},
\end{align}
where $\varrho_{i}=(\varrho_{i1}, \ldots, \varrho_{iM})\in\mathscr{U}$ and $i\in\{1, \ldots, 6\}$. The sufficient condition for \eqref{eqn-22} is that for all $\alpha_{k}\in\mathscr{A}$, $\Delta\in\bm{\Delta}$, $i\in\mathfrak{L}$ and $m\in\mathfrak{M}$, $\mathfrak{X}^{\top}_{k}\Phi_{1}\mathfrak{X}_{k}\leq0$ holds with
\begin{align*}
\Phi_{1}&=\begin{bmatrix}
\Phi_{11} & \star & \star \\ \Phi^{\top}_{12} & \Phi_{22} & \star \\ \Phi^{\top}_{13} & \Phi^{\top}_{23} & \Phi_{33}
\end{bmatrix},
\end{align*}
where $\hat{A}_{in}:=\bar{A}_{in} +\bar{B}\Delta\bar{C}_{i}$, $\hat{E}_{in}:=\bar{E}_{in}+\bar{B}\Delta\bar{F}_{i}, \Phi_{11}:=\bar{p}_{m}(\sum^{N}_{n=1}\alpha_{kn}\hat{A}_{in})^{\top}P_{j}(\sum^{N}_{n=1}\alpha_{kn}\hat{A}_{in})-\varrho_{1m}P_{i}
+\varrho_{2m}H^{\top}_{i}H_{i}+\varrho_{3m}a_{3}I-\varrho_{6m}\bar{Q}^{i}_{11}, \Phi_{12}:=\bar{p}_{m}(\sum^{N}_{n=1}\alpha_{kn}\hat{A}_{in})^{\top}P_{j}\boldsymbol{\Upsilon}(\sum^{N}_{n=1}\alpha_{kn}\hat{E}_{in})
-\varrho_{6m}\bar{Q}^{i}_{12}, \Phi_{13}=0, \Phi_{22}:=\bar{p}_{m}(\sum^{N}_{n=1}\alpha_{kn}\hat{E}_{in})^{\top}\boldsymbol{\Upsilon}^{\top}
P_{j}\boldsymbol{\Upsilon}(\sum^{N}_{n=1}\alpha_{kn}\hat{E}_{in})-\varrho_{4m}a_{4}I-\varrho_{6m}\bar{Q}^{i}_{22}, \Phi_{23}=0, \Phi_{33}=(\sum^{N}_{n=1}\alpha_{kn}\hat{E}_{in})^{\top}\bar{P}_{j}(\sum^{N}_{n=1}\alpha_{kn}\hat{E}_{in})-a_{5}\varrho_{5m}I$ with $\bar{Q}^{i}_{11}, \bar{Q}^{i}_{12}, \bar{Q}^{i}_{22}$ are defined in the theorem.

Applying Schur complement lemma, $\Phi_{1}\leq0$ is equivalent to $\sum^{N}_{n=1}\alpha_{kn}\Phi_{2}\leq0$, where the matrix $\Phi_{2}$ is defined as
\begin{align*}
\begin{bmatrix}\begin{smallmatrix}
\Lambda_{i} & \star & \star& \star & \star & \star & \star \\
-\varrho_{6m}\bar{Q}^{i}_{12} & \Pi_{i} & \star & \star & \star & \star & \star \\
0 & 0 & -a_{5}\varrho_{5m}I & \star & \star & \star & \star \\
\sqrt{\bar{p}_{m}}H_{i} & 0 & 0 & -I & \star & \star & \star \\
\sqrt{\bar{p}_{m}}P_{j}\hat{A}_{in} & \sqrt{\bar{p}_{m}}P_{j}\boldsymbol{\Upsilon}\hat{E}_{in}  & 0   & 0  & -P_{j} & \star & \star \\
0 & \sqrt{\bar{p}_{m}}P_{j}\boldsymbol{\Upsilon}\hat{E}_{in}  & 0 & 0 & 0  & -P_{j} & \star \\
0 & 0 & \sqrt{\bar{p}_{m}}P_{j}\boldsymbol{\Upsilon}\hat{E}_{in}  & 0 & 0 & 0 & -\bar{P}_{i}
\end{smallmatrix}\end{bmatrix}.
\end{align*}
Moreover, $\Phi_{2}\leq0$ can be rewritten as $\mathcal{M}+\mathcal{N}\mathcal{O}\mathcal{P}+\mathcal{P}^{\top}\mathcal{O}^{\top}\mathcal{N}^{\top}\leq0$, where $\mathcal{O}=\diag\{\Delta, \Delta\}$ and
\begin{align*}
\mathcal{M}&=\begin{bmatrix}\begin{smallmatrix}
\Lambda_{i} & \star & \star& \star & \star & \star & \star \\
-\varrho_{6m}\bar{Q}^{i}_{12} & \Pi_{i} & \star & \star & \star & \star & \star \\
0 & 0 & -a_{5}\varrho_{5m}I & \star & \star & \star & \star \\
\sqrt{\bar{p}_{m}}H_{i} & 0 & 0 & -I & \star & \star & \star \\
\sqrt{\bar{p}_{m}}P_{j}\bar{A}_{in} & \sqrt{\bar{p}_{m}}P_{j}\boldsymbol{\Upsilon}\bar{E}_{in} & 0   & 0  & -P_{j} & \star & \star \\
0 & P_{j}\boldsymbol{\Upsilon}\sqrt{\bar{p}_{m}}\bar{E}_{in} & 0 & 0 & 0  & -P_{j} & \star \\
0 & 0 & P_{j}\boldsymbol{\Upsilon}\sqrt{\bar{p}_{m}}\bar{E}_{in} & 0 & 0 & 0 & -\bar{P}_{i}
\end{smallmatrix}\end{bmatrix}, \\
\mathcal{N}^{\top}&=\begin{bmatrix}
\sqrt{\bar{p}_{m}}\bar{C}_{i}& 0 & 0& 0& 0& 0& 0\\
0 & \sqrt{\bar{p}_{m}}\bar{F}_{i}& \sqrt{\bar{p}_{m}}\bar{F}_{i} & 0 & 0& 0 & 0
\end{bmatrix}, \\
\mathcal{P}&=\begin{bmatrix}
0& 0 &0& 0 &\bar{B}^{\top}P_{j} & 0 & 0 \\
0& 0 & 0 & 0 &\bar{B}^{\top}\boldsymbol{\Upsilon}P_{j} &\bar{B}^{\top}\boldsymbol{\Upsilon} P_{j} &\bar{B}P_{j}
\end{bmatrix}.
\end{align*}
Based on the S-procedure in \cite{Wang2007robust}, $\mathcal{M}+\mathcal{N}\mathcal{O}\mathcal{P}+\mathcal{P}^{\top}\mathcal{O}^{\top}\mathcal{N}^{\top}\leq0$ if and only if there exists $\rho>0$ such that
\begin{align*}
\begin{bmatrix}
\mathcal{M} & \star & \star \\
\mathcal{N}^{\top} & -\rho I & \star  \\
\rho\mathcal{P} & 0 & -\rho I
\end{bmatrix}\leq0,
\end{align*}
which further holds due to the LMI condition \eqref{eqn-21}. Hence, the proof is completed.
\end{proof}

\subsection{The Periodic Protocol Case}
\label{sec-periodic}

For the periodic protocol \eqref{eqn-8}, the candidate Lyapunov function is chosen to be
\begin{equation}
\label{eqn-23}
V(\bar{x}_{k})=\bar{x}^{\top}_{k}P_{i}\bar{x}_{k}.
\end{equation}
where $i=k\pmod L$. As the counterpart of Theorem \ref{thm-2}, the following theorems are presented. Their proofs are along the similar lines of the proofs of Theorem \ref{thm-2}, and omitted here.

\begin{theorem}
\label{thm-2}
Let Assumptions \ref{asn-1}-\ref{asn-3} hold and the probability distribution for $(h, \tau)$ be given. If there exist $P_{i}=P^{\top}_{i}>0$ with $P_{\mathfrak{a}L+i}=P_{i}$ for all $\mathfrak{a}\in\mathbb{N}$ and $i\in\mathfrak{L}$, $\varrho_{1}, \varrho_{2}, \varrho_{3}, \varrho_{4}, \varrho_{5}\in\mathscr{U}$ and $\rho>0$ such that \eqref{eqn-17} holds for all $i\in\mathfrak{L}$, $m\in\mathfrak{M}$, $n\in\mathfrak{N}$, where $\Lambda_{i}:=-\varrho_{1m}P_{i}+\varrho_{5m}\gamma_{3}I$, $\Pi_{i}:=-\varrho_{3m}\gamma_{1}I$, $\bar{Q}^{i}_{12}=\bar{Q}^{i\top}_{21}=0$, $P_{j}=P_{i+1}$, $\bar{P}_{i}=\boldsymbol{\Upsilon}\odot P_{i}$, $\boldsymbol{\Upsilon}=\diag\{(1-\bar{\alpha})\bar{\alpha}I, (1-\bar{\beta})\bar{\beta}I, (1-\bar{\alpha})\bar{\alpha}I, (1-\bar{\beta})\bar{\beta}I\}$, then the system \eqref{eqn-11} with the periodic protocol is EMSISS and the $\mathcal{H}_{\infty}$ performance is achieved.
\end{theorem}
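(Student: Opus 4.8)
The plan is to reuse the Lyapunov machinery of Lemma~\ref{lem-2} with the time-dependent, $L$-periodic candidate \eqref{eqn-23}, and to follow the proof of Theorem~\ref{thm-1} while discarding the scheduling-region terms, since the periodic protocol \eqref{eqn-8} prescribes the active node deterministically rather than through a quadratic selection rule. First I would set $a_1 := \min_{i \in \mathfrak{L}} \lambda_{\min}(P_i)$ and $a_2 := \max_{i \in \mathfrak{L}} \lambda_{\max}(P_i)$; the periodicity $P_{\mathfrak{a}L+i} = P_i$ guarantees these are well defined over all $k \in \mathbb{N}$, and they immediately yield the sandwich bound \eqref{eqn-12} for $V(\bar{x}_k, k) = \bar{x}_k^\top P_{k \bmod L} \bar{x}_k$. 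Because Lemma~\ref{lem-2} is already stated for a time-varying $V$, establishing the one-step dissipation inequality \eqref{eqn-13} at each $k$ (with the successor matrix $P_{i+1}$, whence the substitution $P_j = P_{i+1}$ in \eqref{eqn-17}) suffices to conclude EMSISS; iterating automatically accounts for the period.

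The core step is the one-step decrease. I would evaluate $\E\{V(\bar{x}_{k+1}, k+1)\}$ on the model \eqref{eqn-6} and take expectations over the two independent sources of randomness. The continuous parameters $(h_k, \tau_k)$ are handled exactly as in Theorem~\ref{thm-1}: conditioning on the triangle $\mathcal{S}_m$ and bounding the conditional mean by its worst case over $\mathcal{S}_m$ produces the probability-weighted sum $\sum_{m=1}^{M} \bar{p}_m (\cdots)$. The Bernoulli variables $\alpha_k, \beta_k$ in \eqref{eqn-4} contribute through their first and second moments: the mean update enters via $\Upsilon$, while the variances $(1-\bar\alpha)\bar\alpha$ and $(1-\bar\beta)\bar\beta$ generate the auxiliary $\bar{w}_k$-channel with weight $\bar{P}_i = \boldsymbol{\Upsilon} \odot P_i$ and the matrix $\boldsymbol{\Upsilon}$. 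The essential simplification relative to Theorem~\ref{thm-1} is that no region indicator $(\bar{x}_k, \bar{\epsilon}_k) \in \Psi_i$ is invoked, so the S-procedure multipliers $\zeta_{i,l}$ and the blocks $\bar{Q}^i_{11}, \bar{Q}^i_{12}, \bar{Q}^i_{22}$ all drop out—precisely why the statement sets $\bar{Q}^i_{12} = 0$ and removes $\varrho_6$, leaving $\Lambda_i = -\varrho_{1m}P_i + \varrho_{5m}\gamma_3 I$ and $\Pi_i = -\varrho_{3m}\gamma_1 I$.

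With the dissipation target in hand, I would invoke Lemma~\ref{lem-1} to replace $(\mathcal{A}_{\sigma_k,h_k,\tau_k}, \mathcal{B}_{\sigma_k,h_k,\tau_k})$ by the norm-bounded polytopic representation $\bar{A}_{in} + \bar{B}\Delta\bar{C}_i$ and $\bar{E}_{in} + \bar{B}\Delta\bar{F}_i$, then push the inequality inside the convex combinations: since the $\alpha_{kn}$ lie in $\mathscr{A}$ and the $\varrho_\ell$ lie in $\mathscr{U}$, it suffices to verify a single vertex–triangle condition for each $(i,m,n)$. A Schur complement on the quadratic-in-$P_j$ terms lifts the expression to the block form, and writing the $\Delta$-dependent part as $\mathcal{M} + \mathcal{N}\mathcal{O}\mathcal{P} + \mathcal{P}^\top \mathcal{O}^\top \mathcal{N}^\top$ with $\mathcal{O} = \diag\{\Delta, \Delta\}$ lets the S-procedure introduce $\rho > 0$ and deliver the LMI \eqref{eqn-17} with the periodic specialization of $\Lambda_i$, $\Pi_i$ and $P_j = P_{i+1}$. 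Reading off $\gamma_2 = a_4 + a_5$ then certifies the $\mathcal{H}_\infty$ performance as in Theorem~\ref{thm-1}.

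The main obstacle I anticipate is bookkeeping the interaction of the two randomness sources in the one-step expectation—specifically, correctly isolating the Bernoulli variance so that the cross terms between the mean channel ($\Upsilon$) and the fluctuation channel ($\Upsilon_k$ with covariance $\boldsymbol{\Upsilon}$) vanish and only the diagonal $\bar{P}_i = \boldsymbol{\Upsilon} \odot P_i$ survives on the $\bar{w}_k$-block. Verifying the independence and the moment computations that justify discarding these cross terms, together with confirming that the worst-case-over-$\mathcal{S}_m$ bound commutes with the Bernoulli expectation, is the delicate part; the remaining convexity, Schur, and S-procedure manipulations are routine once that structure is fixed.
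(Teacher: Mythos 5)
The paper itself omits the proof of Theorem~\ref{thm-2}, stating only that it follows along the same lines as the proof of Theorem~\ref{thm-1}, and your proposal is precisely that adaptation: the periodic Lyapunov function \eqref{eqn-23} with $P_{j}=P_{i+1}$, the removal of the scheduling-region S-procedure terms ($\zeta_{i,l}$, $\varrho_{6}$, and the $\bar{Q}^{i}$ blocks), and the same gridding, Schur-complement, and norm-bounded-uncertainty steps leading to \eqref{eqn-17}. Your approach matches the paper's intended argument, and your flagged concern about separating the Bernoulli mean and variance channels is the same implicit bookkeeping the paper absorbs into the $-P_{j}$ and $-\bar{P}_{i}$ blocks of \eqref{eqn-17}.
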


Due to the convergence and boundedness of the system state and Assumption \ref{asn-3}, the quantization error $\bar{\epsilon}\in\mathbb{R}^{n_{z}}$ is bounded and convergent along the time line.
In terms of the EMSISS, we have that
\begin{align}
\label{eqn-24}
\limsup_{k\rightarrow\infty}\E\{\|\bar{x}_{k}\|^{2}\}&\leq\gamma_{1}\sup_{k\rightarrow\infty}\|\bar{\epsilon}_{k}\|^{2} \nonumber \\
&\leq\gamma_{1}\sup_{k\rightarrow\infty}\|\Lambda\mu_{k}\|^{2},
\end{align}
where $\mu_{k}$ is the quantization parameter at the time instance $t_{k}$. Observe from \eqref{eqn-24} that there exists an ultimate bound for the system state. Since the quantization parameter is non-increasing, $\sup_{k\rightarrow\infty}\|\mu(t_{k})\|^{2}=\|\mu(t_{0})\|^{2}$, and thus the ultimate bound is finite and only related to the initial value of the quantization parameter. In addition, with the convergence of the quantization error, the system state is convergent asymptotically and thus $\lim_{k\rightarrow\infty}\E\{\|\bar{x}_{k}\|^{2}\}=0$. In terms of the $\mathcal{H}_{\infty}$ performance analysis, the zero initial condition, which is need in many existing works \cite{Yang2006control, Wang2007robust}, leads to $\bar{\epsilon}\equiv0$, and thus is not required here. Instead, $\bar{\epsilon}$ is treated as a vanishing disturbance for the system \eqref{eqn-10}.

\section{Numerical Example}
\label{sec-example}

In this section, a benchmark example is borrowed from \cite{Heemels2010networked, Heemels2009networked} to illustrate the obtained results. The system dynamics and the applied continuous-time controller are given in the form of \eqref{eqn-1}-\eqref{eqn-2} with the matrices specified as follows:
\begin{align*}
A_{\p}&=\begin{bmatrix}1.380 & -0.208 & 6.715& -5.676\\ -0.581 &4.290 & 0& 0.675 \\ 1.067 &4.273 &-6.654 & 5.893 \\ 0.048 &4.273 & 1.343 &-2.104 \end{bmatrix}, \\
B_{\p}&=\begin{bmatrix} 0 & 0 \\ 5.679 & 0 \\ 1.136 & -3.146\\1.136 & 0 \end{bmatrix}, \quad C_{\p}=\begin{bmatrix}1 & 0 & 1 & -1 \\ 0 & 1 & 0 & 0 \end{bmatrix}, \\
A_{\ct}&=0,   B_{\ct}=\begin{bmatrix} 0 & 1  \\ 1 & 0  \end{bmatrix},
C_{\ct}=\begin{bmatrix} -2 & 0  \\ 0 & 8  \end{bmatrix}, D_{\ct}=\begin{bmatrix} 0 & -2 \\ 5 & 0 \end{bmatrix}.
\end{align*}
To evaluate the MATI $h_{\mati}$ and the MAD $h_{\mad}$, we assume here that only the plant output is transmitted over the network, whereas the control input is transmitted to the actuators directly; see \cite{Heemels2010networked}. Therefore, there are $L=2$ sensor nodes for $y_{1}$ and $y_{2}$, which further implies that $\Gamma_{1}=\diag\{1, 0\}$ and $\Gamma_{2}=\diag\{0, 1\}$. The lower bounds on the transmission intervals and delays are set the same as $10^{-3}$. Both the MATI $h_{\mati}$ and the MAD $h_{\mad}$ needed to be established to guarantee the stability and performance of NQCS with communication delays.

\begin{figure}[!t]
\begin{center}
\begin{picture}(60, 100)
\put(-60,-15){\resizebox{60mm}{40mm}{\includegraphics[width=2.5in]{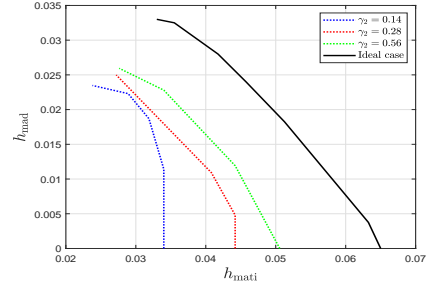}}}
\end{picture}
\end{center}
\caption{Illustration of the relations among the $\mathcal{H}_{\infty}$ attenuation level $\gamma_{2}$, the MATI $h_{\mati}$ and the MAD $h_{\mad}$ in the RR protocol case.}
\label{fig-2}
\end{figure}

Here, we consider the TOD protocol. First, we assume that $\operatorname{Pr}((h, \tau)\in\Theta)=1,$ where $\Theta=\{(h, \tau)\in\mathbb{R}^{2}: h\in[10^{-3}, 0.1], \tau\in[10^{-3}, 0.05]\}$. Furthermore, we let $\operatorname{Pr}((h, \tau)\in\Theta)=\int_{\Theta}p(h, \tau)dhd\tau$, where $p(h, \tau)$ is the corresponding marginal probability density function (MPDF). More specifically, we consider the uniform MPDF given by $p(h, \tau)=(0.1-10^{-3})^{-1}(0.05-10^{-3})^{-1}$ for $(h, \tau)\in\Theta$ and $h\geq\tau$, and $p(h, \tau)=0$ elsewhere.

To verify the stability and the $\mathcal{H}_{\infty}$ performance, the first step is to develop the approximated model \eqref{eqn-11} by implementing the proposed procedure. To simply the partition of the set $\Theta$, we can partition the intervals $[10^{-3}, 0.1]$ and $[10^{-3}, 0.05]$ into $N_{a}$ and $N_{b}$ uniform parts, respectively. Therefore, we have $2N_{a}N_{b}$ triangles and $(N_{a}+1)(N_{b}+1)$ vertices. By increasing $N_{a}$ and $N_{b}$, the approximation accuracy is improved, whereas the computational complexity is getting higher. After the partition, we can compute the probability in each triangle, and obtain all the matrices $\bar{A}_{\sigma n}, \bar{E}_{\sigma n}, \bar{B}, \bar{C}_{\sigma}$ and $\bar{F}_{\sigma}$ with $n\in\{1, \ldots, (N_{a}+1)(N_{b}+1)\}$ and $\sigma\in\mathfrak{L}$. In addition, we assume that the plant output is quantized by a zoom quantizer with the constraints $\max_{j\in\{1, 2\}}\{\Lambda_{j}\}=0.8$ and $\max_{j\in\{1, 2\}}\{\Omega_{j}\}=0.6$. Using the NCS toolbox \cite{Bauer2012networked}, the feasibility of the obtained LMIs in Theorem \ref{thm-1} and the minimization of the $\mathcal{H}_{\infty}$ attenuation level $\gamma_{2}$ are verified by obtaining admissible $h_{\mati}$ and $h_{\mad}$ for the desired system performance. Therefore, the upper bound on $\gamma_{2}$ is related to both $h_{\mati}$ and $h_{\mad}$. Let $a_{3}=10^{-2}$ and $a_{5}=10^{-4}$. Given the dropout rates with $(\bar{\alpha}, \bar{\beta})=(0.8, 1)$, the tradeoff curves for different $\mathcal{H}_{\infty}$ attenuation levels are depicted in Fig. \ref{fig-2}. In Fig. \ref{fig-2}, the ideal case means the deterministic case without quantization and dropouts, and corresponds to the case of the infinity $\mathcal{H}_{\infty}$ attenuation level. Therefore, this case provides the boundaries for the upper bounds of $h_{\mati}$ and $h_{\mad}$. We observe from Fig. \ref{fig-2} that $h_{\mati}$ and $h_{\mad}$ is getting non-conservative with the increase of $\gamma_{2}$. In addition, $h_{\mati}$ and $h_{\mad}$ are affected by the dropout rates, that is, the increase of the dropout rates results in the decrease of $h_{\mati}$ and $h_{\mad}$.

\section{Conclusion}
\label{sec-conclusion}

This paper addressed stability and $\mathcal{H}_{\infty}$ performance analysis problem of stochastic networked and quantized control systems. Using the discrete-time modelling approach, the plant and the controller were discretised and a stochastically discrete-time system model was obtained. To facilitate the stability analysis, the discrete-time system was overapproximated by a polytopic system with additional norm-bounded uncertainties. For the polytopic system with different protocols and quantizers, sufficient conditions were established for mean-square exponential input-to-state stability and the $\mathcal{H}_{\infty}$ performance. In the future, our work will focus on the controller and observer design of stochastic networked control systems and networked control systems with large transmission delays.


\end{document}